\newcommand{\ket}[1]{\vert #1 \rangle}
\newcommand{\Nat}{\mathbb N}
\newcommand{\ceil}[1]{\lceil #1 \rceil}
\newcommand{\mybar}[1]{\lambda}
\newcommand{\poly}{\mathrm{poly}}
\newcommand{\Aa}{\mathscr{A}}
\newcommand{\Mm}{\mathscr{M}}
\newcommand{\sceil}[1]{\lceil #1 \rceil}
\newcommand{\Qc}{\mathcal{Q}}
\newtheorem{theorem}{Theorem}
\newtheorem{proposition}{Proposition}[section]
\newtheorem{definition}{Definition}[section]
\newtheorem{lemma}{Lemma}[section]
\newenvironment{proof-sketch}{\trivlist\item[]\emph{Brief proof sketch}:}%
{\unskip\nobreak\hskip 1em plus 1fil\nobreak$\Box$
\parfillskip=0pt%
\endtrivlist}
\begin{document}
\author{%
Fran{\c c}ois Le Gall \\ 
     Department of Computer Science\\
     Graduate School of Information Science and Technology\\
    The University of Tokyo\\
    \texttt{legall@is.s.u-tokyo.ac.jp} }
\title{A Time-Efficient Output-Sensitive Quantum Algorithm \\for  Boolean Matrix Multiplication}
\date{}
\maketitle
\begin{abstract}
This paper presents a quantum algorithm that computes the product of two $n\times n$ Boolean matrices in $\tilde O(n\sqrt{\ell}+\ell\sqrt{n})$ time, where~$\ell$ is the number of non-zero entries in the product. 
This improves the previous output-sensitive quantum algorithms for Boolean matrix multiplication in the time complexity setting by Buhrman and \v{S}palek (SODA'06) and Le Gall (SODA'12). 
We also show that our approach cannot be further improved unless a breakthrough is made: we prove that any significant improvement would imply the existence of an algorithm based on quantum search that multiplies two $n\times n$ Boolean matrices in $O(n^{5/2-\varepsilon})$ time, for some constant $\varepsilon>0$.
\end{abstract}
\section{Introduction}
\subsection{Background}
Multiplying two Boolean matrices, where addition is interpreted as a logical OR and multiplication as a logical AND, 
is a fundamental problem that has found applications in many areas of computer science
(for instance, computing the 
transitive closure of a graph \cite{Fischer+71,Furman70,MunroIPL71} or solving all-pairs path problems \cite{Dor+SICOMP00,Galil+97,SeidelJCSS95,Shoshan+FOCS99}).
The product of two $n\times n$ Boolean  matrices can be trivially computed in time $O(n^3)$.
The best known algorithm is obtained by seeing the input matrices as integer matrices, computing the product,
and converting the product matrix to a Boolean matrix. Using the algorithm by Coppersmith and Winograd~\cite{Coppersmith+90} for multiplying integer matrices (and more generally for multiplying matrices over any ring),
or its recent improvements by Stothers \cite{Stothers10} and Vassilevska Williams~\cite{WilliamsSTOC12},
this gives a classical algorithm for Boolean matrix multiplication with time complexity $O(n^{2.38})$. 

This algebraic 
approach has nevertheless many disadvantages, the main being that the huge constants involved in 
the complexities make these algorithms impractical. 
Indeed, in the classical setting, much attention has focused 
on algorithms that do not use reductions to matrix multiplication over rings, but instead are based on search or
on combinatorial arguments. Such algorithms are often called 
\emph{combinatorial algorithms}, and the main open problem in this field is to understand whether a $O(n^{3-\varepsilon})$-time 
combinatorial algorithm, for some constant $\varepsilon>0$,  exists for Boolean matrix multiplication.
Unfortunately, there has been little progress on this question. 
The best known combinatorial classical algorithm for Boolean 
matrix multiplication, by Bansal and Williams  \cite{Bansal+FOCS09}, 
has time complexity $O(n^3/\log^{2.25}(n))$. 

In the quantum setting, there exists a straightforward $\tilde O(n^{5/2})$-time\footnote{In this paper the notation $\tilde O$ suppresses $\poly(\log n)$ factors.} algorithm that
computes the
 product of two $n\times n$ Boolean matrices $A$ and~$B$: for each pair
of indexes 
$i,j\in\{1,2,\ldots,n\}$, check if there exists an index $k\in\{1,\ldots,n\}$ such that $A[i,k]=B[k,j]=1$ in time $\tilde O(\sqrt{n})$ using Grover's quantum search algorithm~\cite{GroverSTOC96}. Buhrman and \v{S}palek \cite{Buhrman+SODA06} observed that a similar approach leads to 
a quantum algorithm that computes the product $AB$ in $\tilde O(n^{3/2}\sqrt{\ell})$ time, 
where $\ell$ denotes the 
number of non-zero entries in $AB$. Since the parameter $\ell\in\{0,\ldots,n^2\}$ represents the 
sparsity of the output matrix, such an algorithm will be referred as \emph{output-sensitive}. 
Classical output-sensitive 
algorithms for Boolean matrix multiplication have also been constructed recently:
 Amossen and Pagh~\cite{Amossen+09} constructed an algorithm with time complexity $\tilde O(n^{1.724}\ell^{0.408}+n^{4/3}\ell^{2/3}+n^2)$, while 
Lingas \cite{Lingas11} constructed an algorithm with time complexity $\tilde O(n^2\ell^{0.188})$. 
The  above $\tilde O(n^{3/2}\sqrt{\ell})$-time quantum algorithm beats both of them when $\ell\le n^{1.602}$. 
Note that these two classical algorithms are based on the approach by Coppersmith and Winograd \cite{Coppersmith+90} and
are thus not combinatorial. 

Le Gall \cite{LeGallSODA12} has recently shown that 
there exists an output-sensitive quantum algorithm that computes the product of two $n\times n$ Boolean matrices
with time complexity $O(n^{3/2})$ if  $1\le \ell\le n^{2/3}$ and $O(n\ell^{3/4})$ if $n^{2/3}\le \ell\le n^{2}$.
This algorithm, which improves the quantum algorithm by Buhrman and \v{S}palek \cite{Buhrman+SODA06},
was constructed by combining ideas from works by Vassilevska Williams and Williams~\cite{Williams+FOCS10} and  
Lingas~\cite{Lingas11}.

Several developments concerning the quantum query complexity of this problem, 
where the complexity under consideration is the number of queries to the entries of the input matrices $A$ and $B$,
have also happened.
Output-sensitive quantum algorithms for Boolean matrix 
multiplication in the query complexity setting were first proposed in~\cite{Williams+FOCS10}, and then improved in~\cite{LeGallSODA12}.
Very recently, 
Jeffery, Kothari and Magniez~\cite{Jeffery+v2} significantly improved those results:
they showed that the quantum query complexity 
of computing the product of two $n\times n$ Boolean matrices with~$\ell$ non-zero entries is
$\tilde O(n\sqrt{\ell})$, and gave a matching (up to polylogarithmic factors) lower bound $\mathrm{\Omega}(n\sqrt{\ell})$.
The quantum query complexity of Boolean matrix multiplication may thus be considered as settled.

Can the quantum time complexity of Boolean matrix multiplication can be further improved as well?
The most fundamental question is of course whether there exists a quantum algorithm that
uses only quantum search or similar techniques with time complexity 
$O(n^{5/2-\varepsilon})$, for some constant $\varepsilon>0$, when $\ell\approx n^2$. This question is especially motivated by its apparently deep connection to the design of subcubic-time classical combinatorial algorithms for Boolean matrix multiplication: a $O(n^{5/2-\varepsilon})$-time quantum algorithm would correspond to an amortized cost of $O(n^{1/2-\varepsilon})$ per entry of the product, which may provides us with a new approach to develop a subcubic-time classical combinatorial algorithm, i.e., an algorithm with amortized cost of $O(n^{1-\varepsilon'})$ per entry of the product. Studying quantum algorithms for Boolean matrix multiplication in the time complexity setting can then, besides its own interest, be considered as a way to gain new insight about the optimal value of the exponent of matrix multiplication in the general case (i.e., for dense output matrices).
In comparison, when the output matrix is dense, the classical and the quantum query complexities of matrix multiplication are both trivially equal to $\mathrm{\Theta}(n^2)$.

\subsection{Statement of our results}
In this paper we build on the recent approach by Jeffery, Kothari and Magniez~\cite{Jeffery+v2}
to construct a new time-efficient  output-sensitive quantum algorithm for
Boolean matrix multiplication. 
Our main result is stated in the following theorem.
\begin{theorem}\label{theorem_1}
There exists a quantum algorithm that computes the product of two $n\times n$ Boolean matrices
with time complexity $\tilde O(n\sqrt{\ell}+\ell\sqrt{n})$,
where $\ell$ denotes the number of non-zero entries in the product. 
\end{theorem}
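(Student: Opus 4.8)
The plan is to build on the query-efficient algorithm of Jeffery, Kothari and Magniez~\cite{Jeffery+v2}, which reduces Boolean matrix multiplication to a family of graph-collision-type search problems and solves the whole family with $\tilde O(n\sqrt\ell)$ queries, and to turn that reduction into a time-efficient one by realising each search subproblem with nested Grover searches and quantum approximate counting instead of query-optimal quantum walks; the price of this substitution, together with the cost of confirming the entries that are output, is what will produce the extra $\ell\sqrt n$ summand.

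First I would dispose of two routine points. A standard doubling argument --- run the main procedure for $\hat\ell=1,2,4,\dots,n^2$, aborting a run once its time exceeds $\tilde O(n\sqrt{\hat\ell}+\hat\ell\sqrt n)$ --- reduces everything to designing, for a known $\hat\ell$ with $\ell\le\hat\ell\le 2\ell$, a procedure that outputs all $\ell$ nonzero entries of $C=AB$ within that budget; the aborted smaller runs sum geometrically to within the final budget, and correctness is inherited because the procedure will be built to keep searching until no undiscovered nonzero entry remains. Also, a single entry $C[i,j]$ is evaluated by a Grover search over the contraction index $k\in[n]$ in time $\tilde O(\sqrt n)$, so from now on ``$C[i,j]=1$?'' is an oracle of cost $\tilde O(\sqrt n)$, while ``$A[i,k]=1$'', ``$B[k,j]=1$'' and membership in the growing lists of already-discovered entries and already-read rows/columns are oracles of cost $\tilde O(1)$, the latter obtained by storing those lists in quantum-accessible hash tables.

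The heart of the argument is the $O(\hat\ell)$-entry output routine. I would follow the JKM idea of splitting the nonzero entries of $C$ according to their number of witnesses, where $k$ is a witness for $(i,j)$ when $A[i,k]=B[k,j]=1$. Entries with many witnesses are harvested by drawing a random set $R\subseteq[n]$ of contraction indices of size about $\sqrt{\hat\ell}$, reading for each $k\in R$ the corresponding column of $A$ and row of $B$ in time $O(n)$, and inserting every not-yet-present entry of the rank-one block they span; the total insertion work is charged against the at most $O(\hat\ell)$ entries produced, and the reading costs $\tilde O(n\sqrt{\hat\ell})$ overall. Entries with few witnesses are recovered one at a time by a Grover search for a witnessed triple $(i,j,k)$ whose pair $(i,j)$ is not yet in the discovered list, with the search space pruned via the graph-collision structure exactly as in JKM so that, once the sampling rate is balanced against the few-witness threshold, the successive searches telescope to $\tilde O(n\sqrt{\hat\ell})$; each newly found entry is then confirmed and recorded in time $\tilde O(\sqrt n)$, which contributes the $\tilde O(\hat\ell\sqrt n)$ term, and the witness counts needed to calibrate the sampling are obtained by quantum approximate counting and absorbed into the $\tilde O(n\sqrt{\hat\ell})$ term.

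The hard part, I expect, is twofold. First, the regime in which $A$ and $B$ are dense while $C$ is sparse: reading a whole column of $A$ or row of $B$ can be afforded only about $\hat\ell/\sqrt n$ times, so the sampling and the bookkeeping must be arranged so that every unit of work is charged either to one of the $O(\hat\ell)$ output entries --- at $\tilde O(\sqrt n)$ apiece --- or to the $\tilde O(n\sqrt{\hat\ell})$ search budget, and never to the $n$ rows or columns of a dense input. Second, JKM attain the optimal query bound through quantum walks for graph collision, which are not known to run time-efficiently on the instances arising here; the bulk of the proof will be to show that the Grover-and-sampling substitute above solves every such instance time-efficiently within the query cost already budgeted, so that no efficiency is lost beyond the claimed additive $\ell\sqrt n$.
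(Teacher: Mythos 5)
Your overall direction (make the Jeffery--Kothari--Magniez reduction time-efficient using Grover-type primitives, accepting an extra $\ell\sqrt n$ term) is the same as the paper's, but the two ingredients that actually carry the proof are missing or deferred. First, the $\ell\sqrt n$ term does not come, in the paper, from a $\tilde O(\sqrt n)$ per-entry ``confirmation'' cost on the full instance. Any Grover-based collision search must be restricted to pairs $(i,j)$ not yet output, and once $L$ pairs have been removed the natural residual search costs $\Omega(\sqrt L)$ per new collision (this is exactly the $\tilde O(\sqrt L+\sqrt n)$ bound of Proposition~\ref{prop}); letting $L$ grow to $\hat\ell\gg n$ on a single global instance gives total $\hat\ell^{3/2}$, not $\hat\ell\sqrt n$. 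The paper circumvents this with a separate, purely classical ingredient your sketch does not contain: a Lingas-style reduction (Lemma~\ref{lemma_subarray}, Proposition~\ref{prop_dec}) that randomly permutes rows and columns and blocks the problem into $O(\hat\ell/n)$ rectangular products each having only $O(n)$ non-zero output entries, each then solved in $\tilde O(n^{3/2})$. You instead assert that the search/bookkeeping part totals $\tilde O(n\sqrt{\hat\ell})$ and explicitly defer the justification (``the bulk of the proof will be to show\ldots''), which is precisely the content of the theorem, not a routine verification.

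Second, the mechanisms you do specify have concrete failures. In the ``many-witness'' harvest, enumerating the rank-one block of a sampled index $k$ costs $|\{i:A[i,k]=1\}|\cdot|\{j:B[k,j]=1\}|$, and the same output entries recur for different $k\in R$, so the work cannot be charged to newly produced entries: if $A$ has $\sqrt\ell$ all-one rows and $B$ has $\sqrt\ell$ all-one columns, every sampled $k$ re-enumerates the same $\ell$ pairs, for total work about $\ell\sqrt{\hat\ell}$ (e.g.\ $n^3$ when $\ell\approx n^2$), far beyond $n\sqrt{\hat\ell}+\hat\ell\sqrt n$. In the ``few-witness'' phase, a Grover search over triples $(i,j,k)$ with $(i,j)$ not yet found does not telescope to $\tilde O(n\sqrt{\hat\ell})$: when only entries with $O(1)$ witnesses remain, finding one costs about $\sqrt{n^3/t}$ with $t$ entries left, summing to $n^{3/2}\sqrt{\hat\ell}$. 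What is needed is the structured search the paper builds: Grover over witness indices $k$ wrapped around a per-$k$ collision search restricted to not-yet-found pairs, made time-efficient by the data structure $\Mm$ (per-row degrees and lists of already-found columns, i.e.\ non-edges), the largest-degree-marked-vertex argument that bounds the residual search space by the number of missing edges, and the grouping of witness indices that amortizes the cost to $\tilde O(n\sqrt\lambda)$ when $\lambda=O(n)$ (Propositions~\ref{prop} and~\ref{proposition_2}). Without these two steps --- the time-efficient graph-collision routine and the blocking reduction for $\ell\ge n$ --- your outline does not establish the claimed bound.
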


\begin{figure}\label{fig}
\begin{center}
 \includegraphics[scale=0.8]{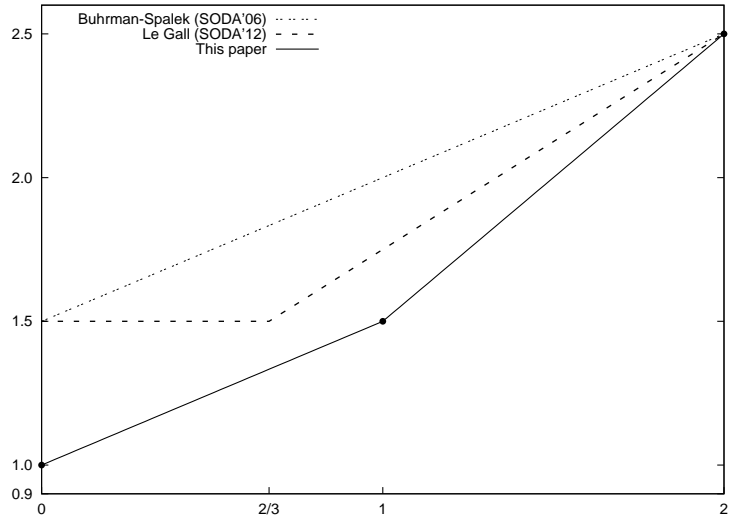}\vspace{-5mm}
\caption{The upper bounds on the time complexity of quantum algorithms for matrix multiplication given in Theorem \ref{theorem_1} (in solid line). 
The horizontal axis represents the logarithm of $\ell$ with respect to basis $n$ (i.e., the value $\log_n(\ell)$). 
The vertical axis represents the logarithm of the complexity with respect to basis $n$.
The dashed line represents the upper bounds on the time complexity obtained in \cite{LeGallSODA12},
and the dotted line represents the upper bounds obtained in \cite{Buhrman+SODA06}.
}
\end{center}
\end{figure}

The upper bounds of Theorem \ref{theorem_1} are illustrated in Figure 1. 
Our algorithm improves the quantum algorithm by Le Gall~\cite{LeGallSODA12}
for any value of~$\ell$ other than $\ell \approx n^2$ (we obtain the same 
upper bound $\tilde O(n^{2.5})$ for $\ell\approx n^2$).
It also beats 
the classical algorithms by Amossen and Pagh~\cite{Amossen+09} and 
Lingas \cite{Lingas11} mentioned earlier, which are based on the algebraic approach, for any value $\ell\le n^{1.847}$
(i.e., whenever $\ell\sqrt{n}\le n^2\ell^{0.188}$).

As will be explained in more details below, for $\ell\le n$ our result can be seen as a time-efficient version of the quantum algorithm constructed  for the query complexity setting in \cite{Jeffery+v2}. The query complexity lower bound $\mathrm{\Omega}(n\sqrt{\ell})$ proved in \cite{Jeffery+v2} shows that the time complexity of our algorithm is optimal, up to a possible polylogarithmic factor, for $\ell\le n$. 
The most interesting part of our results is perhaps the upper bound $\tilde O(\ell\sqrt{n})$ we obtain 
for $\ell\ge n$, which corresponds to the case where the output matrix  
is reasonably dense and differs from the query complexity upper bounds obtained in  \cite{Jeffery+v2}.
We additionally show that, for  values
$\ell\ge n$, no quantum algorithm based on search can perform better than ours 
unless there exists a quantum algorithm based on search 
that computes the product of two arbitrary $n\times n$ Boolean matrices
with time complexity significantly better that $n^{5/2}$.
The formal statement follows.
\begin{theorem}\label{theorem2}
Let $\delta$ be any function such that $\delta(n)>0$ for all $n\in\Nat^+$.
Suppose that, for some value $\lambda\ge n$, there exists 
a quantum algorithm $\Qc$ that,
given as input any $n\times n$ Boolean matrices $A$ and $B$ such that the number of non-zero entries 
in the product $AB$ is at most
$\lambda$, computes $AB$ in
$O\left(\lambda\sqrt{n}\cdot n^{-\delta(n)}\right)$ time.
Then there exists an algorithm using $\Qc$ as a black-box that
computes the product of two $n\times n$ Boolean matrices with overall time complexity
$\tilde O(n^{5/2-\delta(n)}+n)$.
\end{theorem}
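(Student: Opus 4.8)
The plan is to reduce the multiplication of two arbitrary (possibly dense) $n\times n$ Boolean matrices $A$ and $B$ to $O(n^2/\lambda)$ instances of \emph{sparse} Boolean matrix multiplication, each of which satisfies the promise under which $\Qc$ is guaranteed to run fast, and then to invoke $\Qc$ on every instance. Throughout we may assume $n\le\lambda\le n^2$, the only regime in which the statement is meaningful. Concretely, I would partition the column index set $\{1,\dots,n\}$ into $t=\ceil{n/\floor{\lambda/n}}$ consecutive blocks $G_1,\dots,G_t$, each of size at most $s:=\floor{\lambda/n}$. This is possible because $\lambda\ge n$ forces $s\ge 1$, and $\lambda\le n^2$ then yields $t=O(n^2/\lambda)\le n$. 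For each block $G_i$, let $B^{(i)}$ be the $n\times n$ Boolean matrix that agrees with $B$ on the columns indexed by $G_i$ and is zero on every other column.

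Two elementary facts drive the reduction. First, for every $j\in G_i$ the $j$-th column of $AB^{(i)}$ equals the $j$-th column of $AB$, so assembling the products $AB^{(1)},\dots,AB^{(t)}$ reconstructs $AB$. Second, $AB^{(i)}$ has at most $|G_i|\cdot n\le s\cdot n\le\lambda$ non-zero entries, so $(A,B^{(i)})$ meets the promise required by $\Qc$. The algorithm therefore loops over $i=1,\dots,t$: for each $i$ it gives $\Qc$ access to $B^{(i)}$ (simulated from $B$ at $O(\polylog n)$ overhead per entry), runs $\Qc$ on $(A,B^{(i)})$, and records the non-zero entries reported in the columns belonging to $G_i$. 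By hypothesis each call costs $O(\lambda\sqrt n\cdot n^{-\delta(n)})$, so the $t$ calls together cost $O\!\big(\tfrac{n^2}{\lambda}\cdot\lambda\sqrt n\cdot n^{-\delta(n)}\big)=O(n^{5/2-\delta(n)})$, while iterating over the blocks contributes a further $O(t)=O(n^2/\lambda)=O(n)$.

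The one delicate point, and the main obstacle, is error control: $\Qc$ is a bounded-error quantum algorithm, yet the reduction composes $t=\poly(n)$ of its invocations, so a naive union bound over the $t$ calls fails. Since there is no known way to \emph{verify} a candidate Boolean matrix product faster than recomputing it, one cannot simply check each call's output; instead I would amplify $\Qc$ itself, running it $\Theta(\log n)$ times on each $(A,B^{(i)})$ and returning the most frequent output. A Chernoff bound then shows that the correct product of $(A,B^{(i)})$ is returned with probability $1-n^{-\Omega(1)}$, so a union bound over the $t\le n$ blocks keeps the total error $o(1)$. This $\Theta(\log n)$-fold repetition is exactly what turns the $O(n^{5/2-\delta(n)}+n)$ running time above into the claimed $\tilde O(n^{5/2-\delta(n)}+n)$. (A minor point: one should take the block width to be the \emph{integer} $\floor{\lambda/n}$ rather than $\lambda/n$, so that "at most $\lambda$ non-zero entries" holds exactly; this loses nothing, since $\floor{\lambda/n}\ge\lambda/(2n)$ whenever $\lambda\ge n$.)
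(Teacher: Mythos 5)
Your proposal is correct, but it reaches Theorem~2 by a genuinely different (and in this special case simpler) reduction than the paper's. The paper derives Theorem~2 from its general decomposition result (Proposition~\ref{prop_dec} with $r=\lceil cn/\sqrt{\lambda}\,\rceil$ and $L=n^2$): it randomly permutes the rows of $A$ and the columns of $B$, splits \emph{both} factors into $r$ blocks, and uses Lemma~\ref{lemma_subarray} (a Markov-inequality argument extending Lingas) to argue that each block product is sparse enough \emph{with probability $9/10$}, which in turn forces the bookkeeping assumptions (1)--(3) on the black box (behaviour on promise-violating inputs, no false positives via witnesses, amplified success probability) and $O(\log n)$ repetitions of the whole procedure. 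You instead split only the columns of $B$ into $t=O(n^2/\lambda)$ blocks of width $\lfloor\lambda/n\rfloor$, and the sparsity promise for each sub-instance holds \emph{deterministically}, since every column of a product has trivially at most $n$ non-zero entries, giving at most $\lfloor\lambda/n\rfloor\cdot n\le\lambda$ per block; so you need no probabilistic lemma, no control of $\Qc$ on inputs violating the promise, and only the routine $\Theta(\log n)$-fold amplification (your majority vote works, and taking the union of runs after checking the returned witnesses would do as well). Your calculation $t\cdot O(\lambda\sqrt{n}\,n^{-\delta(n)})=O(n^{5/2-\delta(n)})$ plus the $O(n)$ overhead matches the claimed bound, and your reduction is still classical and combinatorial, so the remark following Theorem~\ref{theorem2} is preserved. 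What the paper's heavier machinery buys is generality: the two-sided randomized split is the one that is actually needed for Theorem~\ref{theorem_1} (where each block product must have only $O(n)$ non-zero entries, which a one-sided column split cannot guarantee without using $\approx n$ blocks), and the paper simply reuses that single proposition for both theorems, whereas your argument is tailored to the parameters of Theorem~2.
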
 
The reduction stated in Theorem \ref{theorem2} is actually classical and combinatorial:
the whole algorithm uses only classical combinatorial operations and calls to $\Qc$.
Thus Theorem~\ref{theorem2} implies that, if for a given value $\ell\ge n$ 
the complexity of Theorem~\ref{theorem_1} 
can be improved to  
$O\left(\ell\sqrt{n}/n^{\varepsilon}\right)$, for some constant $\varepsilon>0$, using
techniques similar to ours (i.e., based on quantum search),
then there exists an algorithm based on quantum search (and classical combinatorial operations)
that
computes the product of two $n\times n$ Boolean matrices with time complexity
$\tilde O(n^{5/2-\varepsilon}+n)$. 


\subsection{Overview of our techniques}
The main tool used to obtain our improvements is the new approach by Jeffery, Kothari and Magniez \cite{Jeffery+v2} to find collisions in the graph associated
with the multiplication of two $n\times n$ Boolean matrices. 
More precisely, it was shown in \cite{Jeffery+v2} how to find up to $t$ collisions in this graph, on a quantum computer, using
$\tilde O(\sqrt{nt}+\sqrt{\ell})$ queries, where $\ell$ is the number of non-zero entries in the product.
We construct (in Section \ref{sec_coll}) a time-efficient version of this algorithm that finds one collision in
$\tilde O(\sqrt{n}+\sqrt{\ell})$ time. We then use this algorithm to design 
a quantum algorithm that computes the matrix product in time $\tilde O(n\sqrt{\ell})$ when $\ell=O(n)$,
which proves Theorem \ref{theorem_1} for $\ell=O(n)$.
Our key technique is the introduction of a small data structure that is still powerful enough to enable 
time-efficient access to exactly all the information about the graph needed by the quantum searches. More precisely, while the size of the graph considered is $O(n^2)$, we show that the size of this data structure can be kept much smaller  --- roughly speaking, the idea is to keep a record of the non-edges of the graph. Moreover, the data structure is carefully chosen so that constructing it, at the beginning of the algorithm, can be done very efficiently (in $\tilde O(n)$ time), and updating it during the execution of the algorithm can be done at a cost less than the running time of the quantum searches. 



We then
prove 
 that the ability of finding up to $n$ non-zero entries of the matrix product  is enough by  
showing (in Section \ref{sec_red})
a classical reduction, for $\ell > n$, from the problem of computing the product of two $n\times n$ Boolean matrices with at most $\ell$ non-zero entries in the product
to the problem of computing $\ell/n$ separate products of two Boolean matrices, each product having at most $O(n)$ non-zero entries. 
The idea is to randomly permute the rows and columns of the input matrices in order
to make the output matrix homogeneous (in the sense that the non-zero entries are distributed almost uniformly), in which case we can decompose the input matrices into smaller blocks and ensure that each product of two smaller blocks contains, with non-negligible probability, at most $O(n)$ non-zero entries.
This approach is inspired by a technique introduced by Lingas~\cite{Lingas11} and then generalized in \cite{Jeffery+v1,LeGallSODA12}. The main difference is that here we focus on the number
of non-zero entries in the product of each pair of blocks, while  \cite{Jeffery+v1,LeGallSODA12,Lingas11} 
focused mainly on the size of the blocks.
The upper bounds of Theorem \ref{theorem_1} for $\ell\ge n$ follow directly from our reduction, and a stronger version of this reduction leads to the proof of Theorem~\ref{theorem2}.

\section{Preliminaries}\label{prelim}
In this paper we suppose that the reader is familiar with quantum computation, and especially with 
quantum search and its variants. We present below the model we are considering for accessing the 
input matrices on a quantum computer, and computing their product. This model is  
the same as the one used in \cite{Buhrman+SODA06,LeGallSODA12}.

Let $A$ and $B$ be two $n\times n$ Boolean matrices, for any positive integer $n$ (the model presented below can be 
generalized to deal with rectangular matrices in a straightforward way).
We suppose that these matrices can be accessed directly by a quantum algorithm.
More precisely, we have an oracle~$O_A$ that, for any $i,j\in\{1,\ldots,n\}$, 
any $a\in\{0,1\}$ and any  $z\in\{0,1\}^\ast$, performs the unitary mapping 
$
O_A\colon \ket{i}\ket{j}\ket{a}\ket{z}\mapsto\ket{i}\ket{j}\ket{a\oplus A[i,j]}\ket{z},
$
where $\oplus$ denotes the bit parity (i.e., the logical XOR).
We have a similar oracle $O_B$ for $B$. 
Since we are interested in time complexity, we will count all the computational steps of the algorithm and assign a cost of one for each call 
to $O_A$ or $O_B$, which corresponds to the cases where quantum access to the inputs $A$ and $B$ can be done at 
unit cost, for example in a random access model working in quantum superposition (we refer to  \cite{Nielsen+00} for an 
extensive treatment of such quantum random access memories).

Let $C=AB$ denote the product of the two matrices $A$ and $B$. 
Given any indices $i,j\in\{1,\ldots,n\}$ 
such that $C[i,j]=1$, a witness for this non-zero entry is defined as an 
index $k\in\{1,\ldots,n\}$ such that $A[i,k]=B[k,j]=1$. We define a quantum
algorithm for Boolean matrix multiplication as follows.

\begin{definition}
A quantum algorithm for Boolean matrix multiplication is a quantum algorithm that, 
when given access to oracles $O_A$ and $O_B$ corresponding to Boolean matrices 
$A$ and $B$, outputs
with probability at least $2/3$ all the non-zero entries of the product $AB$ along with
one witness for each non-zero entry. 
\end{definition}
The complexity of several algorithms in this paper will be stated using an upper bound $\lambda$ on the number $\ell$
of non-zero entries in the product $AB$.  The same complexity, up to a logarithmic factor, can actually be obtained 
even if no nontrivial upper bound on $\ell$ is known a priori. 
The idea is, similarly to what was done in \cite{Williams+FOCS10,LeGallSODA12}, to try successively $\lambda=2$ (and find up to 2 non-zero entries), $\lambda=4$ (and find up to 4 non-zero entries), \ldots 
and stop when no new non-zero entry is found. The complexity of this approach is, up to a logarithmic factor, the complexity of the last iteration
 (in which the value of $\lambda$ is $\lambda=2^{\ceil{\log_2\ell}+1}$ if $\ell$ is a power of two,
and $\lambda=2^{\ceil{\log_2\ell}}$ otherwise). 
In this paper we will then assume, without loss of generality, that a value $\lambda$ such that $\ell\le \lambda \le 2\ell$ is always
available. 

\section{Finding up to $O(n)$ Non-zero Entries}\label{sec_coll}
Let $A$ and $B$ be the two $n\times n$ Boolean matrices of which we want to compute the product.
In this section we define, following \cite{Jeffery+v2}, a graph collision problem and use it to show how to compute 
up to $O(n)$ non-zero entries of $AB$.

Let $G=(I,J,E)$ be a bipartite undirected graph over two disjoint sets~$I$ and~$J$, each of size~$n$.
The edge set $E$ is then a subset of $I\times J$.
When there is no ambiguity it will be convenient to write $I=\{1,\ldots, n\}$ and $J=\{1,\ldots, n\}$.
We now define the concept of a collision for the graph $G$.
\begin{definition}
For any index $k\in \{1,\ldots,n\}$, a $k$-collision for $G$ is an edge $(i,j)\in E$ such that 
$A[i,k]=B[k,j]=1$. A collision for $G$ is an edge $(i,j)\in E$ that is a $k$-collision for some index $k\in \{1,\ldots,n\}$.
\end{definition}

We suppose that the graph $G$ is given by 
a data structure $\Mm$ that contains the following information:
\begin{itemize}
\item[$\bullet$]
for each vertex $u$ in $I$, the degree of $u$; 
\item[$\bullet$]
for each vertex $u$ in $I$, a list of all the vertices of $J$ not connected to $u$.
\end{itemize}
The size of $\Mm$ is at most $\tilde O(n^2)$, but the key idea is that its size
will be much smaller when $G$ is ``close to'' a complete bipartite graph.
 Using adequate data structures to implement~$\Mm$ (e.g., using self-balancing binary search trees), 
we can perform the following four access operations in $\poly(\log n)$ time.
\begin{itemize}
\item[]
\texttt{get-degree($u$)}: get the degree of a vertex $u\in I$
\item[]
\texttt{check-connection($u,v$)}: check if the vertices $u\in I$ and $v\in J$ are connected
\item[]
\texttt{get-vert$_{I}$($r,d$)}:
get the $r$-th smallest vertex in $I$ that has degree at most $d$
\item[]
\texttt{get-vert$_{J}$($r,u$)}:
get the $r$-th smallest vertex in $J$ not connected to $u\in I$
\end{itemize}
For the latter two access operations, the order on the vertices refer to the usual 
order $\le$ obtained when seeing vertices in $I$ and $J$ as integers in $\{1,\ldots,n\}$.
We assume that these two access operations output an error message when the
query is not well-defined (i.e, when $r$ is too large).

Similarly, we can update $\Mm$ in $\poly(\log n)$ time to take in consideration the removal of one edge $(u,v)$ from $E$
(i.e., update the degree of $u$ 
and update the list of vertices not connected to~$u$). This low complexity 
will be crucial 
since our main algorithm (in Proposition \ref{proposition_2} below)
will 
remove successively edges from~$E$.

Let $L$ be an integer such that $0\le L\le n^2$. 
We will define our graph collision problem, denoted $\proc{Graph Collision($n$,$L$)}$,
as the problem of finding a collision for $G$ under the promise that $|E|\ge n^2-L$, 
i.e., there are at most $L$ missing edges in $G$. The formal definition is as follows.

\begin{codebox}
$\proc{Graph Collision($n$,$L$)}$  $\:\:[\:$ here $n\ge 1$ and $0\le L\le n^2$ $\:\:]$\\
\zi\const{input:} two $n\times n$ Boolean matrices $A$ and $B$
\zi \hspace{11mm} a bipartite graph $G=(I\cup J,E)$, with $|I|=|J|=n$, given by $\Mm$
\zi \hspace{11mm} an index $k\in \{1,\ldots,n\}$
\zi\const{promise:} $\:|E|\ge n^2-L$ 
\zi\const{output:} one $k$-collision if such a collision exists 
\end{codebox}

The following proposition shows that there exists a time-efficient quantum algorithm solving this problem.
The algorithm is similar to the query-efficient quantum algorithm given in~\cite{Jeffery+v2}, 
but uses the data structure $\Mm$ in order to keep the time complexity low.

\begin{proposition}\label{prop}
There exists a quantum algorithm running in time $\tilde O(\sqrt{L}+\sqrt{n})$ that solves, with high probability, the problem $\proc{Graph Collision($n$,$L$)}$.
\end{proposition}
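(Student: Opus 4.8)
The plan is to adapt the query-efficient quantum algorithm of~\cite{Jeffery+v2} for finding a collision in such a bipartite graph --- which uses $\tilde O(\sqrt n+\sqrt L)$ queries to $O_A$ and $O_B$ --- into a time-efficient algorithm, by showing that every access it makes to the graph can be carried out in $\poly(\log n)$ time through the data structure $\Mm$. Write $R=\{i\in I\st A[i,k]=1\}$ and $S=\{j\in J\st B[k,j]=1\}$: a $k$-collision is precisely an edge of $G$ between $R$ and $S$, membership in $R$ (resp.\ $S$) costs one call to $O_A$ (resp.\ $O_B$), and the promise says that $G$ misses at most $L$ edges. First I would obtain constant-factor estimates of $r=|R|$ and $s=|S|$ by quantum approximate counting in $\tilde O(\sqrt n)$ time, outputting ``no collision'' if $r=0$ or $s=0$.

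The algorithm of~\cite{Jeffery+v2} then proceeds by a case analysis on $r$ and $s$. In the \emph{dense} regime, where $rs$ exceeds by a constant factor the number of missing edges, a (near-)uniform pair $(i,j)\in R\times S$ is an edge with probability at least $1/2$; since a near-uniform element of $R$ (resp.\ $S$) is produced by a Grover search over $I$ (resp.\ $J$) in $\tilde O(\sqrt n)$ time and the test $(i,j)\in E$ costs $\poly(\log n)$ via \texttt{check-connection}, a constant number of amplitude-amplification rounds returns a collision in $\tilde O(\sqrt n)$ time. In the remaining regime the promise forces almost all missing edges to lie inside $R\times S$, and one exploits this: a typical $i\in R$ has only $O(L/r)$ vertices of $J$ not connected to it (its degree, hence this count, is read off by \texttt{get-degree}), so either $i$ already has a neighbour in $S$ --- found by a Grover search over $J$ --- or $S$ sits inside the explicitly known set of vertices not connected to $i$, which \texttt{get-vert$_{J}$} enumerates; a symmetric step on the $S$ side, together with the further sub-division by the sizes $r,s$ carried out in~\cite{Jeffery+v2}, confines a collision to a product of two explicitly indexable sets over which a single Grover search stays within the $\tilde O(\sqrt n+\sqrt L)$ budget. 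The handful of auxiliary counts needed along the way (for instance the number of vertices of $I$ not connected to a designated vertex of $J$, obtained by approximate counting with \texttt{check-connection}) cost $\tilde O(\sqrt n)$ each and are invoked only $O(1)$ times. Since every graph access reduces to one of \texttt{check-connection}, \texttt{get-degree}, \texttt{get-vert$_{I}$}, \texttt{get-vert$_{J}$}, each costing $\poly(\log n)$, the total running time is the query complexity times a polylogarithmic factor, i.e.\ $\tilde O(\sqrt n+\sqrt L)$; repeating the nested searches $O(\log n)$ times amplifies the success probability to the required level.

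The step I expect to be the main obstacle is checking that $\Mm$ --- which stores degrees and lists of non-connected vertices only for the vertices of $I$ --- really suffices, since the query algorithm of~\cite{Jeffery+v2} is symmetric in $I$ and $J$. The point to get right is that whenever the algorithm would want the set of vertices of $I$ not connected to a given $j\in J$, either only the \emph{cardinality} of that set matters (which approximate counting supplies in $\tilde O(\sqrt n)$ time) or the enumeration takes place inside a search already restricted to a small, explicitly indexed candidate set, so that it can be done lazily with \texttt{check-connection} at $\poly(\log n)$ cost per element; verifying that the sparse-case reduction can be organized so that the reverse adjacency is never needed in bulk, and tracking the thresholds so that each branch fits the $\tilde O(\sqrt n+\sqrt L)$ bound, is the delicate part, while composing the error probabilities of the $O(1)$ layers of amplitude amplification is routine and absorbed into the $\tilde O$.
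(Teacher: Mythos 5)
Your overall plan --- turning the query-efficient graph-collision algorithm of \cite{Jeffery+v2} into a time-efficient one by routing every graph access through $\Mm$ --- is indeed the paper's plan, but the algorithm you actually sketch is not the one used there, and its crucial case is left unproved. The dense case (when $rs$ exceeds the bound $L$ on missing edges) is fine, but in the complementary case your argument is only ``a typical $i\in R$ has $O(L/r)$ non-neighbours; a Grover search, a symmetric step on the $S$ side, and the further sub-division of \cite{Jeffery+v2} confine a collision to a product of two explicitly indexable sets''. That is precisely the step that needs a proof, and the natural completions fail. Suppose you have certified that $S$ is contained in the set of non-neighbours of a chosen $i\in R$, of size $O(L/r)$: you must still find an edge of $R\times S$, and since $R$ is \emph{not} explicitly indexable (producing one near-uniform element of $R$ costs $\tilde O(\sqrt{n/r})$), amplitude amplification or Grover over $R\times(\text{non-neighbours of } i)$ costs $\tilde O\bigl(\sqrt{L}\cdot\sqrt{n/r}\bigr)=\tilde O\bigl(\sqrt{nL/r}\bigr)$, which blows the $\tilde O(\sqrt n+\sqrt L)$ budget when $r$ is small (e.g.\ $r=O(1)$, $L=n$ gives $\tilde O(n)$). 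The ``symmetric step'' is worse: it would need, for a pivot $j\in J$, the list of vertices of $I$ not connected to $j$, which $\Mm$ does not store and which approximate counting (cardinality only) cannot substitute for when you must search inside that set. So the gap you yourself flag as the delicate part is a real one: as described, the sparse case has no algorithm meeting the stated bound with the stated data structure.

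The missing idea is the choice of pivot. The paper first finds, by D\"urr--H{\o}yer maximum finding with \texttt{get-degree}, the \emph{marked vertex $u\in I$ of largest degree} $d$, in $\tilde O(\sqrt n)$ time; it then Grover-searches the neighbourhood $S$ of $u$ (tested with \texttt{check-connection($u,\cdot$)}) for a marked vertex, which either yields a collision $(u,j)$ in $\tilde O(\sqrt n)$ time or certifies that every marked vertex of $J$ lies in $J\setminus S$. Because $u$ has maximum degree among marked vertices, every marked $i\in I$ lies in $I'=\{i\in I : \deg(i)\le d\}$, and counting missing edges gives $|I'|\,(n-d)\le n^2-|E|\le L$ with $|J\setminus S|=n-d$; hence all collisions lie in $I'\times(J\setminus S)$, a set of size at most $L$ that is explicitly indexable through $\Mm$ via \texttt{get-vert$_{I}$($\cdot,d$)} and \texttt{get-vert$_{J}$($\cdot,u$)}, so a single Grover search over it finishes in $\tilde O(\sqrt L)$ time. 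This one asymmetric pivot is exactly what makes the one-sided data structure sufficient: no approximate counting of $|R|,|S|$, no dense/sparse split, and no $J$-side non-neighbour lists are ever needed.
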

\begin{proof}
We will say that a vertex $i\in I$ is marked if $A[i,k]=1$, 
and that a vertex $j\in J$ is marked if $B[k,j]=1$. Our goal is thus to find a pair $(i,j)\in E$ of marked vertices.
The algorithm is as follows.

We first use the minimum finding quantum algorithm from \cite{Durr+96} to find the marked vertex $u$ of largest degree in $I$, in $\tilde O(\sqrt{n})$ time using  \texttt{get-degree($\cdot$)} to obtain the order of a vertex from 
the data structure~$\Mm$.
 Let $d$ denote the degree of $u$, let $I'$ denote the set of vertices in $I$ with degree at most $d$,
and let $S$ denote the set of vertices in $J$ connected to $u$.
We then search for one marked vertex in $S$, using Grover's algorithm \cite{GroverSTOC96} with
\texttt{check-connection($u,\cdot$)}, in $\tilde O(\sqrt{n})$ time.
If we find one, then this gives us a $k$-collision and we end the algorithm. 
Otherwise we proceed as follows.
Note that, since each
 vertex in $I'$ has at most $d$ neighbors, by considering the number of missing edges we obtain:
$$|I'|\cdot (n-d)\le n^2-|E|\le  L.$$
Also note that $|J\backslash S|=n-d$.
We do a quantum search on $I'\times (J\backslash S)$ to find one pair of connected
marked vertices in time $\tilde O(\sqrt{|I'|\cdot|J\backslash S|})=\tilde O(\sqrt{L})$,
using \texttt{get-vert$_{I}$($\cdot,d$)} to access the vertices in $I'$ and 
\texttt{get-vert$_{J}$($\cdot,u$)} to access the vertices in  $J\backslash S$.
\end{proof}

We now show how an efficient quantum algorithm that computes up to $O(n)$ non-zero entries of the 
product of two $n\times n$ matrices can be constructed using Proposition \ref{prop}. 

\begin{proposition}\label{proposition_2}
Let $\lambda$ be a known value such that $\lambda=O(n)$.
Then there exists a quantum algorithm that,
given any $n\times n$ Boolean matrices $A$ and $B$ such that the number of non-zero entries 
in the product $AB$ is at most
$\lambda$, 
computes $AB$
in time  
$\tilde O(n\sqrt{\lambda})$.
\end{proposition}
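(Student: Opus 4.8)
The plan is to reduce the problem to repeated applications of the graph collision subroutine of Proposition~\ref{prop}. I maintain a bipartite graph $G$ on $I\cup J$ whose edges record the pairs $(i,j)$ \emph{not yet known} to be non-zero entries of $AB$; initially $G$ is the complete bipartite graph on $I\cup J$, stored by the data structure $\Mm$ (all degrees equal to $n$, all non-neighbour lists empty), which is built in $\tilde O(n)$ time. Whenever a collision $(i,j)$ for some index $k$ is found, $(i,j)$ is a non-zero entry of $AB$ with witness $k$, so I record it, delete the edge $(i,j)$ from $G$, and update $\Mm$ in $\poly(\log n)$ time. The non-zero entries are produced one at a time in this way until none is left.

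First I would build, from Proposition~\ref{prop}, a subroutine that takes a subset $K_0\subseteq\{1,\ldots,n\}$ of candidate witnesses and, with high probability, returns a triple $(i,j,k)$ with $k\in K_0$ and $(i,j)$ a $k$-collision for the current $G$ if such a collision exists, and reports failure otherwise: it suffices to run Grover's algorithm over $k\in K_0$ with the (error-reduced) predicate ``the algorithm of Proposition~\ref{prop} finds a $k$-collision on input $(A,B,G,k)$''. Since Proposition~\ref{prop} runs in time $\tilde O(\sqrt{L}+\sqrt{n})$, where $L$ is the current number of missing edges of $G$, this subroutine runs in time $\tilde O\!\left(\sqrt{|K_0|}\,(\sqrt{L}+\sqrt{n})\right)$. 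The main algorithm then partitions $\{1,\ldots,n\}$ into $p=\ceil{\lambda}$ groups $K_1,\ldots,K_p$, each of size $O(n/\lambda)$ (I may assume $\lambda\ge 1$, and I use the hypothesis $\lambda=O(n)$), and for each group $K_s$ in turn it calls the subroutine on $K_s$ repeatedly, recording the returned entry, deleting the corresponding edge and calling again each time a triple is returned, and moving on to $K_{s+1}$ as soon as the subroutine reports failure.

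Correctness follows from two observations. First, deleting an edge can never create a new collision, so once the loop over $K_s$ terminates there is no $k$-collision in $G$ for any $k\in K_s$, and this stays true for the rest of the execution; hence in the final graph there is no $k$-collision at all, which means every remaining edge $(i,j)$ satisfies $(AB)[i,j]=0$. Second, every deleted edge was recorded together with a genuine witness and an edge is deleted at most once, so the output is exactly the set of non-zero entries of $AB$, each with one witness. Each quantum search is amplified so that the total error over all $O(\lambda)$ calls is at most $1/3$, and any spurious output is discarded after a $\poly(\log n)$-time classical check that $A[i,k]=B[k,j]=1$ and $(i,j)\in E$. For the running time, the promise $\ell\le\lambda$ ensures that at most $\lambda$ edges are ever deleted, so $L\le\lambda=O(n)$ throughout and each subroutine call costs $\tilde O\!\left(\sqrt{n/\lambda}\cdot\sqrt{n}\right)=\tilde O(n/\sqrt{\lambda})$; there are at most $\lambda$ calls returning a triple (one per non-zero entry) and exactly $p=\ceil{\lambda}$ calls reporting failure (one per group), i.e. $O(\lambda)$ calls overall, for a total of $\tilde O(n\sqrt{\lambda})$, which also absorbs the $\tilde O(n)$ preprocessing and the $\tilde O(\lambda)$ updates.

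The step I expect to be the real obstacle is choosing the number of groups correctly. Scanning the $n$ witnesses one at a time (i.e. taking $p=n$) would force $n$ ``failure'' calls of cost $\tilde O(\sqrt{n})$ each, giving only $\tilde O(n^{3/2})$, which is too weak when $\lambda=o(n)$; taking too few groups instead makes each call too expensive. Balancing the $p$ failure calls against their $\tilde O(n/\sqrt{p})$ cost is exactly what forces $p=\Theta(\lambda)$, and one also has to check that with this choice the $\sqrt{L}$ term of Proposition~\ref{prop} is dominated by its $\sqrt{n}$ term (true since $L\le\lambda=O(n)$), so that it does not surface in the final bound. The remaining points --- making Proposition~\ref{prop}'s bounded-error routine usable inside an outer Grover search, and carrying witnesses out of the nested searches --- are routine.
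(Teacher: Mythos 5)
Your proposal is correct and follows essentially the same route as the paper: maintain the complete bipartite graph in the data structure $\Mm$, repeatedly find collisions via a Grover search over a block of candidate witnesses that invokes the algorithm of Proposition~\ref{prop}, delete each found edge, and charge the $\tilde O(\sqrt{n/\lambda}\cdot(\sqrt{\lambda}+\sqrt{n}))$ cost of each of the $O(\lambda)$ calls (at most $\lambda$ successes plus one failure per block) to get $\tilde O(n\sqrt{\lambda})$. Your choice of $\Theta(\lambda)$ witness blocks of size $O(n/\lambda)$ is exactly the balancing underlying the paper's complexity computation, and your added remarks on error amplification and witness verification are consistent with (indeed slightly more explicit than) the paper's treatment.
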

\begin{proof}
Let $A$ and $B$ be two $n\times n$ Boolean matrices such that the product $AB$
has at most $\lambda$ non-zero entries.

We associate with this matrix multiplication the bipartite graph $G=(V,E)$, where $V=I\cup J$ with 
$I=J=\{1,\ldots, n\}$, and define the edge set as $E=I\times J$.
The two components $I$ and $J$ of $G$ are then fully connected: there is no missing edge.
It is easy to see that computing the product of $A$ and $B$ is equivalent 
to computing all the collisions, since a pair $(i,j)$ is a collision if and only if the entry 
in the $i$-th row and the $j$-th column of the product $AB$ is 1.

To find all the collisions, we will basically repeat the following approach: 
for a given $k$, search for a new $k$-collision in $G$ and remove the corresponding edge from $E$ by updating the data structure $\Mm$ corresponding to $G$. 
Since
we know that there are at most $\lambda$ non-zero entries in the matrix product $AB$, at most $\lambda$ collisions will be found (and then removed).
We are thus precisely interested in finding collisions when $|E|\ge n^2-\lambda$, i.e., when there are at most $\lambda$ missing edges in $G$.
We can then use the algorithm of Proposition \ref{prop}. 
The main subtlety is that we cannot simply try all the indexes $k$ successively since the cost would be too high. 
Instead, we will search for good indexes in a quantum way, as described in the next paragraph.

We partition the set of potential witnesses $K=\{1,\ldots,n\}$ into $m=\max(\lambda,n)$ subsets $K_1,\ldots,K_m$, 
each of size at most $\ceil{n/m}$. 
Starting with $s=1$, we repeatedly search for a pair~$(i,j)$ that is a $k$-collision for some $k\in K_s$. This is done by  
doing a Grover search over $K_s$ that invokes the algorithm of Proposition~\ref{prop}. 
Each time a new collision $(i,j)$ is found (which is a $k$-collision for some $k\in K_s$), we immediately remove 
the edge $(i,j)$ from $E$ by updating the data structure $\Mm$. When no other collision is found, we move to $K_{s+1}$.
We end the algorithm when the last set $K_m$ has been processed.

This algorithm will find, with high probability, all the collisions in the initial graph, and thus all the non-zero entries of $AB$.
Let us examine its time complexity.
We first discuss the complexity of creating the data structure $\Mm$ (remember that  updating $\Mm$ to
take in consideration  the removal of one edge from $E$ has polylogarithmic cost). 
Initially $|E|=n^2$, so each vertex of $I$ has the same degree~$n$.
Moreover, for each vertex $u\in I$, there is no vertex in $J$ not connected to~$u$. 
The cost for creating $\Mm$ is thus $\tilde O(n)$ time.
Next, we discuss the cost of the quantum search.
Let $\lambda_s$ denote the number of collisions found when examining the set $K_s$. Note that the search for collisions
(the Grover search that invokes the algorithm of Proposition \ref{prop})
is done $\lambda_s+1$ times 
when examining $K_s$ (we need one additional search to decide that there is no other collision). Moreover, we have $\lambda_1+\cdots+\lambda_m\le \lambda$.
The time complexity of the search is thus
$$
\tilde O\left(\sum_{s=1}^m\sqrt{|K_s|}\times (\sqrt{\lambda}+\sqrt{n}) \times (\lambda_s+1)\right)
=\tilde O\left(\sqrt{n}\lambda+n\sqrt{\lambda}\right)
=\tilde O\left(n\sqrt{\lambda}\right).
$$
The overall time complexity of the algorithm is thus $\tilde O(n\sqrt{\lambda}+n)=\tilde O(n\sqrt{\lambda})$.
\end{proof}

\section{Reduction to Several Matrix Multiplications}\label{sec_red}
Suppose that we have a randomized (or quantum) algorithm $\Aa$ that,
given any $m\times n$ Boolean matrix $A$ and any $n\times m$ Boolean matrix
$B$ such that the number of non-zero entries in the product $AB$ is known to be 
at most $L$, computes $AB$ with time complexity $T(m,n,L)$.
 For the sake of simplicity, we will make the following assumptions on $\Aa$:
\begin{itemize}
\item[(1)]
the time complexity of $\Aa$ does not exceed $T(m,n,L)$
even if the input matrices do not satisfy the promise (i.e., if there
are more than $L$ non-zero entries in the product);
\item[(2)]
the algorithm $\Aa$ never outputs that a zero entry of the product is non-zero; 
\item[(3)]
if the matrix product has at most $L$ non-zero entries, then with probability at least 
$1-1/n^3$ all these entries are found. 
\end{itemize}
These assumptions can be done without loss of generality when considering quantum algorithms for 
Boolean matrix multiplication as defined in Section~\ref{prelim}.
Assumption (1) can be guaranteed simply by supposing that the algorithm 
systematically stops after $T(m,n,L)$ steps. Assumption~(2) can 
be guaranteed since a witness is output for each potential 
non-zero entry found (the witness can be used to immediately check the result).
Assumption~(3) can be guaranted by repeating the original algorithm (which
has success probability at least 2/3) a logarithmic number of times.

The goal of this section is to show the following proposition.
\begin{proposition}\label{prop_dec}
Let $L$ be a known value such that $L\ge n$. Then, for any value $r\in\{1,\ldots,n\}$, there exists an algorithm that,
given any $n\times n$ Boolean matrices $A$ and $B$ such that the number of non-zero entries 
in the product $AB$ is at most~$L$, uses algorithm $\Aa$ to compute with high probability the product $AB$ in time
$$\tilde O\left(r^2\times T\left(\ceil{n/r},n,\frac{100(n+L/r)}{r}\right)+n\right).$$
\end{proposition}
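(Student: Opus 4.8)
The plan is to randomly permute the rows of $A$ and the columns of $B$, cut the two permuted matrices into $r$ row‑blocks and $r$ column‑blocks of dimension $\lceil n/r\rceil$ each, run $\Aa$ on each of the $r^2$ resulting sub‑products (each of shape $\lceil n/r\rceil\times n$ times $n\times \lceil n/r\rceil$, so that $\Aa$ applies with $m=\lceil n/r\rceil$) with promise parameter $L':=100(n+L/r)/r$, and translate the non‑zero entries it reports back through the permutations. I would repeat this whole procedure $t=\Theta(\log n)$ times with fresh independent permutations, collecting all reported entries (with their witnesses) into one balanced search tree, and output its contents at the end. By Assumption~(2) every entry collected in this way is a genuine non‑zero entry of $AB$ with a valid witness, so for correctness it suffices to show that, with high probability, \emph{every} non‑zero entry of $AB$ is reported in at least one of the $t$ iterations.

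Fix a non‑zero entry $(i,j)$ of $AB$ and one iteration, and let $Z$ be the number of non‑zero entries of $AB$ that fall into the same block‑pair as $(i,j)$ under that iteration's permutations. The heart of the argument is the bound $\mathbb{E}[Z]\le L'/5$. To get it I would view the $\ell\le L$ non‑zero entries of $AB$ as edges of a bipartite graph and split them into those sharing the row of $(i,j)$, those sharing its column, the entry $(i,j)$ itself, and those sharing neither; since two distinct rows (resp.\ columns) land in a common block with probability at most $(\lceil n/r\rceil-1)/(n-1)$, and the row‑ and column‑permutations are independent, these four groups contribute at most $O(n/r)$, $O(n/r)$, $1$, and $O(L/r^2)$ respectively — the only property of $AB$ used being that each of its rows and columns has at most $n$ non‑zero entries. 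As $1\le n/r$, the sum is at most $L'/5$. Markov's inequality then shows that the block‑pair of $(i,j)$ contains at most $L'$ non‑zero entries with probability at least $4/5$, and conditioned on this Assumption~(3) guarantees that $\Aa$, run on that sub‑product, reports all of its non‑zero entries — in particular $(i,j)$ — with probability at least $1-1/n^3$. Hence $(i,j)$ is reported in a given iteration with probability at least $2/3$; by independence of the iterations it is missed in all of them with probability at most $(1/3)^t$, and choosing $t=\Theta(\log n)$ and union‑bounding over the at most $L\le n^2$ non‑zero entries makes the overall failure probability at most $1/3$.

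For the running time, sampling the two permutations together with their inverses costs $\tilde O(n)$ per iteration; the oracle for each of the $r^2$ sub‑matrices is simulated from $O_A$ or $O_B$ plus a stored permutation at $O(1)$ overhead per query, so by Assumption~(1) the calls to $\Aa$ in one iteration cost $r^2\,T(\lceil n/r\rceil,n,L')$. Since $\Aa$ halts within $T(\lceil n/r\rceil,n,L')$ steps it reports at most that many entries, so translating them through the inverse permutations and inserting them into the search tree (discarding duplicates) costs a further $\tilde O(r^2\,T(\lceil n/r\rceil,n,L'))$ per iteration. Multiplying by $t=\tilde O(1)$ gives the claimed bound $\tilde O\!\left(r^2\,T(\lceil n/r\rceil,n,L')+n\right)$.

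The step I expect to be the main obstacle is exactly the estimate $\mathbb{E}[Z]\le L'/5$: a single row or column of $AB$ can carry up to $n$ non‑zero entries, all of which are forced into one block whenever that row (or column) is, so the per‑block count cannot be controlled by the average $\Theta(L/r^2)$ alone and one is obliged to allow the extra additive term of order $n/r$ — which is precisely the shape of the threshold $100(n+L/r)/r$. The accompanying conceptual point is that one must \emph{not} try to make all $r^2$ block‑pairs good simultaneously (Markov's inequality is far too weak for a union bound over $r^2$ events); tracking each non‑zero entry individually across independently re‑randomized iterations is what makes the $\Theta(\log n)$‑fold repetition enough.
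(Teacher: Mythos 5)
Your proposal is correct and follows essentially the same route as the paper: random row/column permutations, decomposition into $r^2$ blocks run through $\Aa$ with promise $100(n+L/r)/r$, a per-entry expectation bound plus Markov (your bound $\mathbb{E}[Z]\le L'/5$ is exactly the content of the paper's Lemma on subarrays, just derived by splitting the non-zero entries into same-row, same-column and generic ones rather than bounding the shared row/column cells deterministically), then $\Theta(\log n)$ independent repetitions with a union bound over entries, using Assumptions (1)--(3) exactly as the paper does. The constant bookkeeping you flag as the main obstacle indeed goes through (roughly $5n/r+16L/r^2\le L'/5$), so no gap remains.
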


We will need a lemma in order to prove Proposition \ref{prop_dec}. 
Let $C$ be an $n\times n$ Boolean matrix with at most $L$ non-zero entries. Let $r$ be a positive integer such that $r\le n$.
We choose an arbitrary partition $P_1$ of the set of rows of $C$ into~$r$ blocks in which each block has size at most $\ceil{n/r}$.
Similarly, we choose an arbitrary partition $P_2$ of the set of columns of~$C$ into $r$ blocks in which each block has size at most $\ceil{n/r}$.
These gives a decomposition of the matrix~$C$ into~$r^2$ subarrays (each of size at most $\ceil{n/r}\times \ceil{n/r}$).
We would like to say that, since $C$ has $L$ non-zero entries, then each subarray has at most $O(L/r^2)$ non-zero entries.
This is of course not true in general, but a similar statement will hold with high probability for a given subarray if we permute the rows and 
the columns of $C$ randomly.
We formalize this idea in the following lemma, which can be seen as an extension of a result proved 
by Lingas (Lemma 2 in \cite{Lingas11}).


\begin{lemma}\label{lemma_subarray}
Let $C$ be an $n\times n$ Boolean matrix with at most $L$ non-zero entries. 
Assume that $\sigma$ and $\tau$ are two permutations of the set $\{1,\ldots,n\}$ chosen independently uniformly at random.
Let $C[i,j]$ be any non-zero entry of $C$.
Then, with probability at least $9/10$, after permuting the rows according to $\sigma$ and the columns according to $\tau$,
the subarray containing this non-zero entry (i.e., the subarray containing the entry in the $\sigma(i)$-th row and the $\tau(j)$-th column)
has at most 
$$10\cdot\left(\frac{L\ceil{n/r}^2}{(n-1)^2}+2\ceil{n/r}-1\right)$$
 non-zero entries.
\end{lemma}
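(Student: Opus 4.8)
The plan is to bound the expected number of non-zero entries in the subarray that contains the fixed non-zero entry $C[i,j]$, and then apply Markov's inequality. Fix the non-zero entry $C[i,j]$ and condition on the event that it lands in a particular subarray after the random permutations; by symmetry we may as well just compute, for every \emph{other} non-zero entry $C[i',j']$, the probability that it lands in the same subarray as $C[i,j]$. I would split the count of other non-zero entries into three groups according to whether they share a row with $C[i,j]$ (i.e. $i'=i$, $j'\neq j$), share a column ($i'\neq i$, $j'=j$), or share neither ($i'\neq i$ and $j'\neq j$). The entry $C[i,j]$ itself always contributes $1$.

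The key probability computations are elementary. For an entry $C[i',j']$ with $i'\neq i$ and $j'\neq j$: given that row $i$ is placed in some block, row $i'$ lands in the same row-block with probability at most $\frac{\ceil{n/r}-1}{n-1}\le\frac{\ceil{n/r}}{n-1}$, and independently column $j'$ lands in the same column-block as column $j$ with probability at most $\frac{\ceil{n/r}}{n-1}$, so the joint probability is at most $\frac{\ceil{n/r}^2}{(n-1)^2}$. Since there are at most $L$ non-zero entries total, summing over all such $(i',j')$ gives a contribution of at most $\frac{L\ceil{n/r}^2}{(n-1)^2}$ to the expectation. For an entry sharing the row with $C[i,j]$ (there are at most $\ceil{n/r}-1$ of them that can end up in the same row-block, but more simply at most $n-1$ candidates), it lands in the same subarray iff its column joins column $j$'s block, which happens with probability at most $\frac{\ceil{n/r}-1}{n-1}$; summing over at most $n-1$ such entries contributes at most $\ceil{n/r}-1$. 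The column case is symmetric and contributes at most another $\ceil{n/r}-1$. Adding the $+1$ for $C[i,j]$ itself, the expected number of non-zero entries in the subarray containing $C[i,j]$ is at most
$$\frac{L\ceil{n/r}^2}{(n-1)^2}+2\ceil{n/r}-1.$$
Then Markov's inequality says that with probability at least $9/10$ this count is at most $10$ times its expectation, which is exactly the claimed bound.

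The main thing to be careful about — and I expect it to be the only real subtlety — is making the conditioning and the "at most" counts rigorous: one must verify that conditioning on the placement of row $i$ and column $j$ does not increase the relevant conditional probabilities for the other entries beyond the stated bounds, and that the crude bound of $n-1$ candidates (rather than the tighter $\ceil{n/r}-1$) for row- and column-neighbours is what actually yields the clean $2\ceil{n/r}-1$ term. A clean way to organize this is to write the number of non-zero entries in the target subarray as a sum of indicator random variables $X_{i',j'}$ over all non-zero entries $C[i',j']$ of $C$ (including $(i,j)$ itself, whose indicator is identically $1$ after conditioning), compute $\mathbb{E}[X_{i',j'}]$ in each of the three cases above using linearity of expectation, and then invoke Markov. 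No independence between the $X_{i',j'}$ is needed since we only use linearity of expectation. This is essentially Lingas's argument (Lemma 2 of \cite{Lingas11}) adapted to track non-zero entries per subarray rather than per block, so the structure should go through smoothly.
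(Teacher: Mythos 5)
Your proof is correct and follows essentially the same route as the paper: condition on where row $i$ and column $j$ are placed, bound the expected number of non-zero entries in the target subarray by linearity of expectation, and finish with Markov's inequality. The only (immaterial) difference is bookkeeping --- you sum over the non-zero entries of $C$ and bound the probability each lands in the subarray, while the paper sums over the cells of the subarray and bounds the probability each is non-zero; both yield the same bound $\frac{L\ceil{n/r}^2}{(n-1)^2}+2\ceil{n/r}-1$ on the expectation.
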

\begin{proof}
Since the permutations $\sigma$ and $\tau$ are chosen independently uniformly at random, we can 
consider that the values $\sigma(i)$ and $\tau(j)$ are first chosen, and that only after this choice the 
$2n-2$ other values (i.e., $\sigma(i')$ for $i'\neq i$ and $\tau(j')$ for $j'\neq j$) are chosen.

Assume that the values $\sigma(i)$ and $\tau(j)$ have been chosen.
Consider the subarray of $C$ containing the entry in the $\sigma(i)$-th row and the $\tau(j)$-th column.
Let $S$ denote the set of all the entries of the subarray, and let $T\subseteq S$
denote the set of all the entries of the subarray that are in the $\sigma(i)$-th row or in the $\tau(j)$-th column.
Note that $|S|\le \ceil{n/r}^2$ and $|T|\le 2\ceil{n/r}-1$. Let $X_s$, for each $s\in S$, denote the
random variable with value one if the entry $s$ of the subarray is one,
and value zero otherwise. The random variable representing the number of non-zero entries in the subarray is 
thus $Y=\sum_{s\in S} X_s$. The expectation of $Y$ is 
\begin{eqnarray*}
E[Y]=\sum_{s\in S} E[X_s]=\sum_{s\in S\backslash T} E[X_s]+\sum_{s\in T} E[X_s]
&\le& |S\backslash T|\times \frac{L}{(n-1)^2}+|T|\\
&\le&\frac{L\ceil{n/r}^2}{(n-1)^2}+2\ceil{n/r}-1.
\end{eqnarray*}
The first inequality is obtained by using the inequality $E[X_s]\le 1$ for each entry $s\in T$, and by noting that each of the non-zero entries of $C$ that are neither
in the $i$-th row nor in the $j$-th column has probability exactly $\frac{1}{(n-1)^2}$ to be moved  
into a given entry in $S\backslash T$ by the permutations of the remaining $n-1$ rows and $n-1$ columns. 
From Markov's inequality we obtain:
$$\Pr\left[Y\ge 10\cdot\left(\frac{L\ceil{n/r}^2}{(n-1)^2}+2\ceil{n/r}-1\right)\right]\le \frac{1}{10}.$$

The statement of the lemma follows from the observation that the 
above inequality holds for any choice of $\sigma(i)$ and $\tau(j)$.
\end{proof}

\begin{proof}[Proof of Proposition \ref{prop_dec}]
Take two arbitrary partitions $P_1,P_2$ of the set $\{1,\ldots,n\}$ into $r$ blocks in which each block has size at most $\ceil{n/r}$.
Let us write 
$$\Delta=10\cdot\left(\frac{L\ceil{n/r}^2}{(n-1)^2}+2\ceil{n/r}-1\right).$$
It is easy to show that $\Delta\le \frac{100\left(n+L/r\right)}{r}$ when $n\ge 3$.
We will repeat the following procedure $\ceil{c\log n}$ times, for some large enough constant $c$: 
\begin{itemize}
\item[1.]
Permute the rows of $A$ randomly and denote by $A^\ast$ the resulting matrix;\\
Permute the columns of $B$ randomly and denote by $B^\ast$ the resulting matrix;
\item[2.]
Decompose $A^\ast$ into $r$ smaller matrices $A^\ast_1,\ldots,A^\ast_r$ of size at most $\ceil{n/r}\times n$ by 
partitioning the rows of $A^\ast$ according to $P_1$;\\
Decompose $B^\ast$ into $r$ smaller matrices $B^\ast_1,\ldots,B^\ast_r$ of size at most $n\times \ceil{n/r}$ by 
partitioning the columns of $B^\ast$ according to $P_2$;
\item[3.]
For each $s\in\{1,\ldots,r\}$ and each $t\in\{1,\ldots,r\}$, compute up to $\Delta$ non-zero entries of the product $A^\ast_sB^\ast_t$ using the algorithm $\Aa$.
\end{itemize}

The time complexity of this procedure is 
$\tilde O\left(r^2\times T\left(\ceil{n/r},n,\Delta\right)+n\right)$, 
where the additive term $\tilde O(n)$ represents the time complexity of dealing with the permutations of rows and columns (note that $A^\ast$, $B^\ast$, the $A^\ast_s$'s and the $B^\ast_t$'s have not to be computed explicitly; we only need to be able to recover in polylogarithmic time a given entry of these matrices).

We now show the correctness of the algorithm. First, the algorithm will never output that a zero entry of $AB$ is non-zero, from our assumption on the algorithm~$\Aa$. Thus all the entries output by the algorithm are non-zero entries of $AB$. The question is
whether all the non-zero entries are output. 

Let $(i,j)$ be a fixed non-zero entry of $AB$. Note that each matrix product $A^\ast_sB^\ast_t$ corresponds to 
a subarray of $A^\ast B^\ast$.
From our assumptions on algorithm
$\Aa$, 
this entry will be output with probability $p\ge 1-1/n^3$ at Step 3 of the procedure if the entry (after permutation of the rows and the columns) is in a subarray of $A^\ast B^\ast$ containing at most $\Delta$ non-zero entries. From Lemma \ref{lemma_subarray}, this happens with probability at least $9/10$. 
With probability at least $1-(1/10)^{\ceil{c\log n}}$ this case will happen at least once during the $\ceil{c\log n}$ iterations of the procedure.
By choosing the constant $c$ large enough, we have $1-(1/10)^{\ceil{c\log n}}\ge 1-1/n^3$, and then 
the algorithm outputs this non-zero entry with probability at least $(1-1/n^3)p\ge 1-2/n^3$.
By the union bound we conclude 
that the probability that the algorithm outputs all the non-zero entries of $AB$ is at least $1-2/n$.
\end{proof}

\section{Proofs of Theorems \ref{theorem_1} and \ref{theorem2}}

In this section we give the proofs of Theorems \ref{theorem_1} and \ref{theorem2}.
\begin{proof}[Proof of Theorem \ref{theorem_1}]
Let $A$ and $B$ be two $n\times n$ Boolean matrices such that the product $AB$ has $\ell$ non-zero entries. 
Remember that, as discussed in Section~\ref{prelim},  
an integer $\lambda\in\{1,\ldots,n^2\}$ such that $\ell\le \lambda\le 2\ell$ is known.

If $\lambda\le n$ then the product $AB$ can be computed in time $\tilde O(n\sqrt{\lambda})=\tilde O(n\sqrt{\ell})$ 
by the algorithm of  Proposition \ref{proposition_2}.
Now consider the case $n\le \lambda\le n^2$.
By Proposition~\ref{prop_dec} (with the value $r=\sceil{\sqrt{\lambda/n}}$), the product of $A$ and $B$ can be 
computed  with complexity
$
\tilde O\left(\frac{\lambda}{n}\times T\left(n,n,\Delta\right)+n\right),
$
where $\Delta=O(n)$.
Combined with Proposition~\ref{proposition_2}, this gives
a quantum algorithm that computes the product $AB$ in  
$\tilde O\left(\lambda\sqrt{n}\right)=\tilde O\left(\ell\sqrt{n}\right)$ time. 
\end{proof}
\begin{proof}[Proof of Theorem \ref{theorem2}]
Suppose the existence of a quantum algorithm that computes 
in time 
\[
O\left(\lambda\sqrt{n}\cdot n^{-\delta(n)}\right)
\]
the product of any two $n\times n$ Boolean matrices such that the number of 
non-zero entries in their product is at most~$\lambda$. 
Let $c$ be a positive constant. 
Using Proposition~\ref{prop_dec} with the values $r=\sceil{cn/\sqrt{\lambda}}$ and $L=n^2$, we obtain a 
quantum algorithm that computes 
the product of two $n\times n$ Boolean matrices in time
$$
\tilde O\left(\frac{n^2}{\lambda}\times T\left(n,n,\frac{100n}{\sceil{cn/\sqrt{\lambda}}}+\frac{100n^2}{\sceil{cn/\sqrt{\lambda}}^2}\right)+n\right).
$$
 By choosing the constant $c$ large enough, we can rewrite this upper bound as
\[
\tilde O\left(\frac{n^2}{\lambda}\times T\left(n,n,\lambda\right)+n\right)=
\tilde O\left(n^{5/2-\delta(n)}+n\right),
\]
which concludes the proof of the theorem.
\end{proof}

\section*{Acknowledgments}
The author is grateful to Stacey Jeffery, Robin Kothari and Fr{\'e}d{\'e}ric Magniez
for helpful discussions and comments, and for communicating to him preliminary versions of their
works.
He also
acknowledges support 
from the JSPS and the MEXT,
under the grant-in-aids 
Nos.~24700005, 24106009 and 24240001.

\end{document}